\newtheorem*{lemma}{\indent Lemma}
\newtheorem*{proposition}{\indent Proposition}
\newenvironment{proof}{{\noindent\it Proof:}}{\hfill $\blacksquare$\par}
\renewcommand{\@thesubfigure}{\hskip\subfiglabelskip}
\newcommand{\Rmnum}[1]{\expandafter\@slowromancap\romannumeral #1@}
\newcommand{\Tr}{\text{Tr}}
\crefname{equation}{Eq.~}{Eq.~}
\crefname{figure}{Fig.~}{Fig.~}
\begin{document}

\title{Spin squeezing and concurrence under Lee-Yang dephasing channels}

\author{Yuguo Su}

\affiliation{Zhejiang Province Key Laboratory of Quantum Technology and Device, Department of Physics, Zhejiang University, Hangzhou, Zhejiang 310027, China}

\author{Hongbin Liang}

\affiliation{Zhejiang Province Key Laboratory of Quantum Technology and Device, Department of Physics, Zhejiang University, Hangzhou, Zhejiang 310027, China}

\author{Xiaoguang Wang}

\email{xgwang1208@zju.edu.cn}

\affiliation{Zhejiang Province Key Laboratory of Quantum Technology and Device, Department of Physics, Zhejiang University, Hangzhou, Zhejiang 310027, China}
\begin{abstract}
The Lee-Yang zeros are one-to-one mapped to zeros in the coherence of a probe spin coupled to a many-body system.
Here, we propose two different types of Lee-Yang dephasing channels, where the partition functions vanish at Lee-Yang zeros, and study the spin squeezing under them.
Under the first type of the channels where probes are coupled to their own bath, we find that
the performance of spin squeezing is improved and its maximum only depends on the initial state.
Moreover, the centers of all the concurrence vanishing domains are corresponding to the Lee-Yang zeros.
Under the second type of the channel where probes are coupled to one bath together, the performance of spin squeezing is not improved, however, the concurrence shares almost the same properties under both channels.
These results provide new experimental possibilities in many-body physics and extend a new perspective of the relationship between the entanglement and spin squeezing in probes-bath systems.
\end{abstract}
\maketitle

\section{Introduction}
In 1952, when Lee and Yang studied the statistical theory of equations of state and phase transitions,  they proved that the partition functions of thermal systems exist zero roots, named Lee-Yang zeros, on the complex plane of fugacity or a magnetic field \cite{Yang-PR87.404--409-1952,Lee-PR87.410--419-1952}.
They provided an insight into the thermodynamic properties of the Ising ferromagnet at arbitrary temperature in an arbitrary nonzero external fugacity or magnetic field.
As a cornerstone of statistical mechanics, they revealed the fact that under very general conditions, all the Lee-Yang zeros of a general Ising ferromagnet lie on the unit circle in the complex plane, which is the so-called unit-circle theorem \cite{Lee-PR87.410--419-1952}.
According to the celebrated theorem, under the thermodynamic limit condition, the Lee-Yang zeros form a continuum ring in the complex plane \cite{Yang-PR87.404--409-1952}.
Above the critical temperature, the continuum ring fractures and has a gap around the positive real axis which is a forbidden zone of the roots of the partition function \cite{Yang-PR87.404--409-1952}, in other words, the free energy is analytic and there is no phase transition.
Moreover, as the temperature decreases, the gap becomes narrowing and the two edge points approach the real axis at the critical temperature \cite{Yang-PR87.404--409-1952}.
Fisher \cite{Fisher-PRL40.1610--1613-1978} presented the concept, Yang-Lee edge singularities, which mean that the two edge points of the broken ring are singularity points \cite{Kortman-PRL27.1439--1442-1971}.
Liu \cite{Wei-PRL109.185701-2012} showed that the relationship between Lee-Yang zeros and the zeros in the coherence \cite{Schlosshauer-RMP76.1267--1305-2005,Liu-NJP9.226-2007} of a probe spin coupled to the many-body system is bijective.
Furthermore, with the temperature overtopping the critical point, sudden death and birth of the coherence occurring at critical times are corresponding to the Yang-Lee singularities in the thermodynamic limit.

 Because of the intrinsic difficulty that Lee-Yang zeros would occur only at complex values of physical parameters, which are generally regarded as unphysical, physicists deem it hard to observe them. However, the Lee-Yang zeros have been observed in experiments \cite{Peng2015}, recently.
 The applications of the Lee-Yang theorem are manifold in general ferromagnetic Ising models of arbitrarily high spin \cite{Asano-PTP40.1328-1336-1968,Suzuki-PTP40.1246-1256-1968,Griffiths-JMP10.1559-1565-1969}, antiferromagnetic Ising models \cite{Kim2004}, forecasting the large-deviation statistics of the activity \cite{Brandner2017} as well as other striking types of interactions \cite{Suzuki-JMP9.2064-2068-1968,Suzuki-JMP12.235-246-1971,Kurtze-JSP19.205--218-1978}.

In the past decades, spin squeezing has attracted lots of attention \cite{Kitagawa1993,Wineland1994,Sorensen2001,Toth2007,Toth2009,Sorensen1999,Wang2003,Ma2011}.
It is well known that there are close relations between entanglement and spin squeezing, and a lot of effort has been devoted to unveiling it \cite{Ulam-Orgikh2001,Jin2007,Jafarpour,Messikh2003,Wang2004,Yin2010}.
Because spin squeezing is relatively easy to be generated and measured experimentally \cite{Genes2003,Fernholz2008,Takano2009}, spin-squeezing parameters are promising candidates as measures of many-body correlations.
Improving the precision of measurements is another important application of spin squeezing.
For example, spin squeezing plays an important role in Ramsey spectroscopy \cite{Wineland1992,Wineland1994,Cronin2009,Bollinger1996,Doering2010}, as well as in making high-precision atomic clocks \cite{Sorensen1999,Andre2004,Meiser2008} and gravitational-wave interferometers \cite{Walls,Goda}, etc.

\begin{figure}[htbp]
\begin{minipage}{1\linewidth}
\begin{overpic}[width=\linewidth]{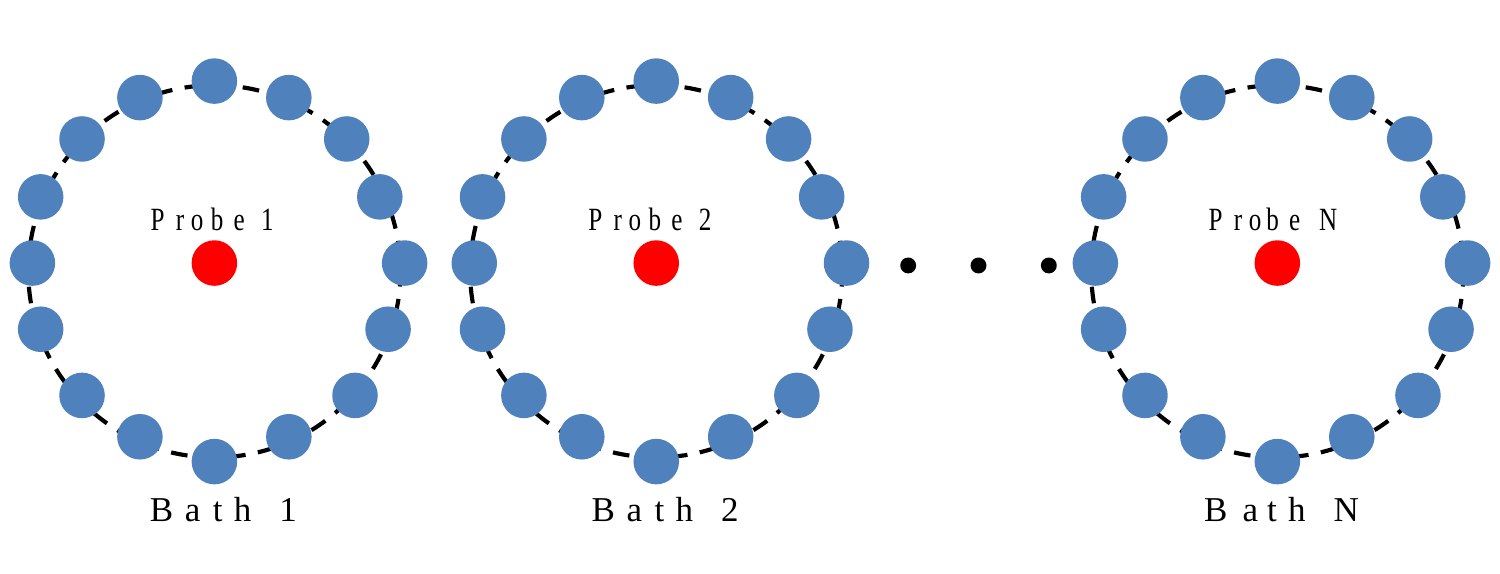}\label{fig1-a}
\put(0,35){$(a)$}
\end{overpic}
\end{minipage}\\
\begin{minipage}{0.6\linewidth}
\begin{overpic}[width=\linewidth]{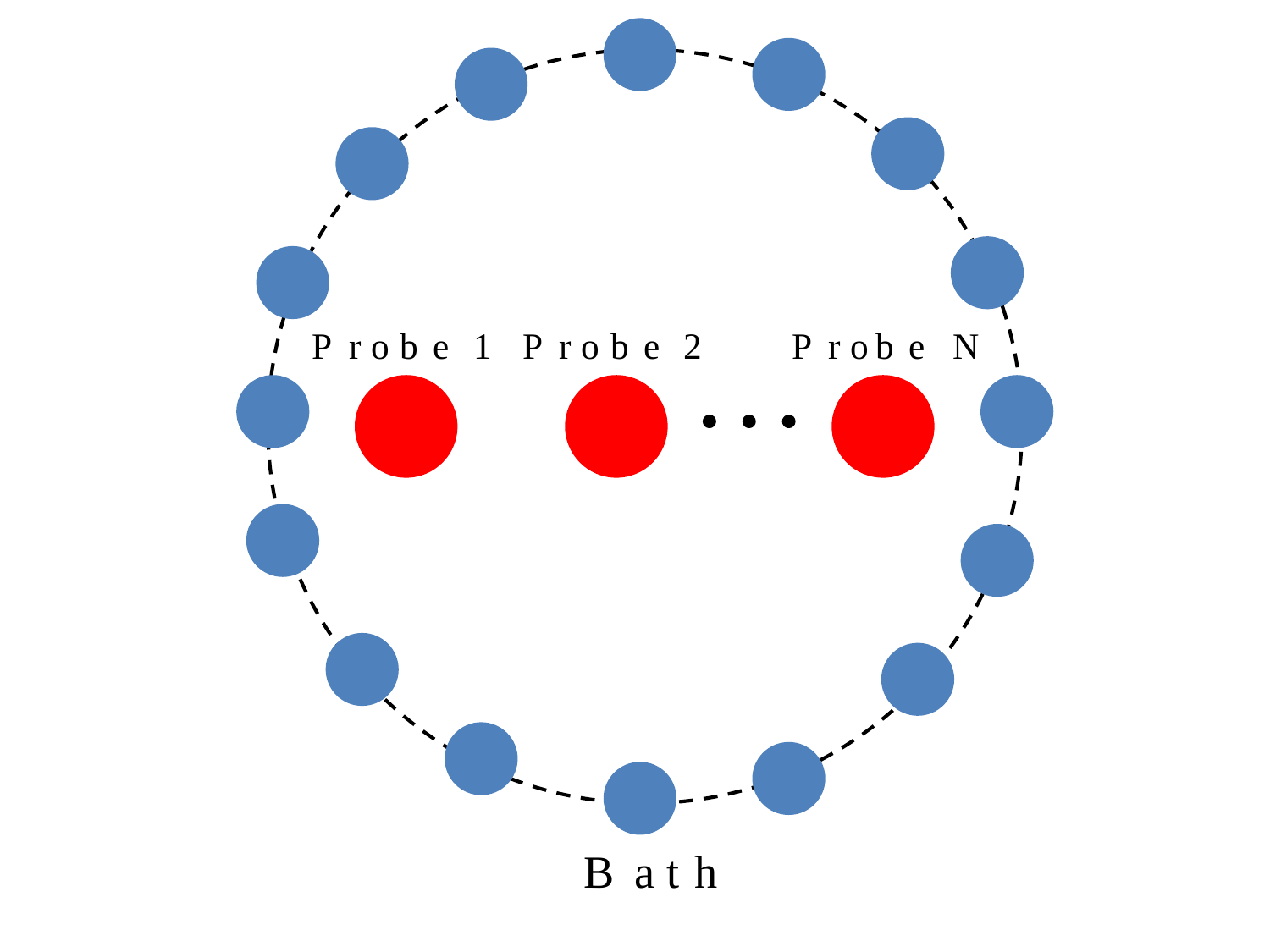}\label{fig1-a}
\put(0,65){$(b)$}
\end{overpic}
\end{minipage}
\protect\protect\caption{
Two different dephasing channels under a probe(s)-bath system.
(a) $N$ dephasing channels where probes are only respectively coupled to their own bath.
(b) The dephasing channel where $N$ probes are coupled to one bath together.
The red circles mark probes and the blue circles mark bath spins.
}
 \label{fig1}
\end{figure}


There are a lot of merits of spin squeezing in measurements, however, we have not seen the work that combines spin squeezing and Lee-Yang zeros.
In this work, we propose two different types of Lee-Yang dephasing channels, where the partition functions vanish at Lee-Yang zeros, and study the spin squeezing under them.
Under the first type of the channels where probes are coupled to their own bath [Fig.~1(a)], we calculate the coherence, the rescaled concurrence and the spin-squeezing parameter under the channels.
We find that the performance of spin squeezing is improved and its maximum only depends on the initial state.
Moreover, the centers of all the concurrence vanishing domains are corresponding to the Lee-Yang zeros.
Under the second type of the channel where probes coupled to one bath together [Fig.~1(b)], the performance of spin squeezing is not improved. However, the concurrence shares almost the same properties under both channels.
Furthermore, under the latter channel, the coherence could reach the stable value $\left(1-\cos^{N-2}\theta\right)/4$ at times in correspondence to the Lee-Yang zeros, where $\theta$ is the twist angle and $N$ is the probe spin number.
These results provide new experimental possibilities in many-body physics and extend a new perspective of the relationship between the entanglement and spin squeezing in probes-bath systems.


This paper is organized as follows.
In Sec.~\ref{Sec. II}, we introduce the Lee-Yang zeros and the dephasing channels.
The initial state of the probe-bath systems and definitions of the spin-squeezing and the concurrence are given in Sec.~\ref{Sec. III}.
In Sec.~\ref{Sec. IV}, we introduce the two different types of  Lee-Yang dephasing channels and prove that the probe(s)-bath system whose ferromagnetic Ising bath is under zero field is a Lee-Yang dephasing channel.
The coherence, the rescaled concurrence and the spin-squeezing parameter of multiprobes coupled to their own bath are given in Sec.~\ref{Sec. V}, and we discuss the influence of the bath spins number $N_b$, the inverse temperature $\beta$, the probes number $N$ and the twisted angle $\theta$ on the them.
In Sec.~\ref{Sec. VI}, we calculate and analyse the coherence, the rescaled concurrence and the spin-squeezing parameter of multiprobes coupled to one bath together.
Finally, a summary is given in Sec.~\ref{Sec. VII}.
\section{Lee-Yang Zeros and Dephasing Channels}\label{Sec. II}
We focus on a general Ising model with ferromagnetic interactions under a magnetic field $h$.
The Hamiltonian \cite{Wei-PRL109.185701-2012} is
\begin{equation}
H_0\left(h\right)=-\sum_{i,j}\lambda_{ij}s_{i}s_{j}-h\sum_{i}s_{i},
\end{equation}
where the spins $s_{i}$ take values $\pm1$ and the interactions $\lambda_{ij}\geq0$.
If the external magnetic field does not exist, all spins in state $-1$ or $+1$ is the ground state, which means it is completely in order.
Expanded by an $N$th order polynomial of $z\equiv \exp\left(-2\beta h\right)$, the partition function of $N_b$ spins at temperature $T$ can be rewritten as
\begin{equation}
Z\left(\beta,h\right)=\Tr\left[e^{-\beta H}\right]=e^{\beta N_bh}\sum_{n=0}^{N_b}f_{n}z^{n},
\end{equation}
where $\beta=1/T$ is the inverse temperature (Boltzmann and Planck constants taken unity) and $f_{n}$ is the partition function with zero magnetic field under the condition that $n$ spins are in the state $-1$ (equivalently, we can also take the $n$ spins in the state $+1$, since we know $f_{n}=f_{N_b-n}$).
In principle, all physical properties of an exotic system can be calculated from the partition function.
Obviously, according to the notable unit-circle theorem, the $N_b$ zeros of the partition function, which are located on the unit circle in the complex plane of $z$ \cite{Lee-PR87.410--419-1952}, can be rewritten as $z_{n}\equiv \exp\left(\text{i}\phi_{n}\right)$ with $n=1,2,\ldots,N_b$.
Supplied the Lee-Yang zeros as a priori knowledge, the partition function can be expressed as
\begin{equation}
Z\left(\beta,h\right)=f_{0}e^{\beta N_bh}\prod_{n=1}^{N_b}\left(z-z_{n}\right),
\end{equation}
by employing the polynomial factorization.
Observing the equation, we can easily know that we can't gain the $z_{n}$ or $\phi_{n}$ straightway.
However, we could prove that the partition function is a real and even function in a ferromagnetic Ising bath under zero field (see Appendix \ref{Appendix_A}).

Here, we introduce the dephasing channels.
We write the total Hamiltonian of the system as follows:
\begin{align}
H=H_0\otimes I_{2\times2}+\eta H_1\otimes \sigma_z,
\end{align}
where $H_0$ is the Hamiltonian of the environment, $\eta H_1\otimes \sigma_z$ is the Hamiltonian of the spin-environment interaction, $\eta$ is a coupling constant, $H_1=-\sum_{j}s_{j}$ acts as the random field for the probe spin, $\sigma_{z}=\left(|0\rangle\langle0|-|1\rangle\langle1|\right)$ is the Pauli matrix and $I_{2\times2}$ is the identity operator.

Therefore, we have the unitary matrix
\begin{align}
U\equiv \text{e}^{-\text{i}Ht}=\text{e}^{-\text{i}H_0t}\text{e}^{-\text{i}\eta \sigma_zH_1t}.\label{eq:U}
\end{align}
Employing the unitary transformation, we can gain the time evolution
density matrix of the system
\begin{equation}
\rho\left(t\right) =U\left(\rho_{B}\left(0\right)\otimes\rho_{P}\left(0\right)\right)U^{\dagger},\label{rho_t}
\end{equation}
where
\begin{align}
  \rho_{P}\left(0\right)=\rho_{00}|0\rangle\langle0|+\rho_{01}|0\rangle\langle1|+\rho_{10}|1\rangle\langle0|+\rho_{11}|1\rangle\langle1|
\end{align}
and $\rho_{B}\left(0\right)$ are the initial density matrices of the spin and the environment, respectively.

Hence, we can calculate the elements of the spin density matrix as follows:
\begin{align}
T_{ij}  =  \Tr_B\big[\langle i|U\left(\rho_{B}\left(0\right)\otimes\rho_{P}\left(0\right)\right)U^{\dagger}|j\rangle\big] , \label{eq:T11_1}
  \end{align}
where $\sigma_z|j\rangle=\left(\text{i}\right)^{2j}|j\rangle$ $\left( i,j=0,1\right)$, and
$\exp\left(\sigma_{z}\right)=\sum_{n=0}^{\infty}\sigma_{z}^{n}/n!$.
Then we find that
\begin{align}
\text{e}^{\pm\text{i}\eta\sigma_z H_1t}|j\rangle &= \text{e}^{\pm\left(\text{i}\right)^{2j+1}\eta H_1t}|j\rangle,\label{eq:simplify_1}
\end{align}
and we can rewrite Eq.~\eqref{eq:T11_1}, for example,
\begin{equation}
T_{00}  = \rho_{00}\Tr_B[\text{e}^{-\text{i} H_0t}\frac{e^{-\beta H_0}}{Z\left(\beta,h\right)}\text{e}^{\text{i} H_0t}]=\rho_{00}.
\end{equation}
Following this method, we gain the evolution of the reduced initial density matrix
\begin{align}
  \rho \left(t\right) =\begin{pmatrix}
    \rho_{00} & \mu\rho_{01}\\
    \nu\rho_{10} & \rho_{11}\\
  \end{pmatrix},
\end{align}
where the coefficients
\begin{align}
\mu&=\Tr_B\left[\frac{e^{-\beta H_0}}{Z\left(\beta,h\right)}\text{e}^{-2\text{i}\eta H_1t}\right],\\
\nu&=\Tr_B\left[\frac{e^{-\beta H_0}}{Z\left(\beta,h\right)}\text{e}^{2\text{i}\eta H_1t}\right].
\end{align}

Comparing with the definition of the dephasing channel
\begin{align}
  \mathcal{E}\begin{pmatrix}
    \rho_{00} & \rho_{01}\\
    \rho_{10} & \rho_{11}\\
  \end{pmatrix}=\begin{pmatrix}
    \rho_{00} & \left(1-p\right)\rho_{01}\\
    \left(1-p\right)\rho_{10} & \rho_{11}\\
  \end{pmatrix},
\end{align}
one can easily find that $\rho\left(t\right)$ is a quantum dephasing channel when $p=1-\mu=1-\nu$, and the three Kraus operators are given by
\begin{align}
  M_0\!=\!\sqrt{\mu}I, M_1\!=\!\sqrt{1\!-\!\mu}|0\rangle\langle0|,M_2\!=\!\sqrt{1\!-\!\mu}|1\rangle\langle1|,
\end{align}
where $I$ is the identity operator.
In the Heisenberg picture, a quantum channel with the Kraus operators is defined via the map
\begin{align}
  \mathcal{E}\left(\rho\right)=\sum_{i}{M_i\rho M_i^{\dagger}},\label{QC}
\end{align}
and an expectation value of the operator $O$  can be gained by
\begin{align}
  \left\langle O\right\rangle=\Tr\left[O\mathcal{E}\left(\rho\right)\right]=\Tr\left[\mathcal{E}^{\dagger}\left(O\right)\rho\right].\label{O}
\end{align}
\section{Initial State,  and Definitions of Spin-squeezing  and Concurrence}\label{Sec. III}
With so many benefits of spin squeezing, we are driven to apply spin squeezing to dephasing channels.
We consider an ensemble of $N$ spin-$1/2$ probes with ground state $|1\rangle$ and excited state $|0\rangle$.
To take advantage of exchange symmetry, we choose the one-axis twisted state
\begin{align}
  |\psi\left(0\right)\rangle=\text{e}^{-\text{i}\theta J_x^2/2}|1\rangle^{\otimes N}\equiv\text{e}^{-\text{i}\theta J_x^2/2}|\textbf{1}\rangle,
\end{align}
where $N$ is the number of the total qubits, the state is prepared by the one-axis twisting Hamiltonian $H=\chi J_x^2$, with the coupling constant $\chi$ and $\theta=2\chi t$ being the twist angle.
We set the mean spin of the initial state along the $z$ direction, the two-qubit reduced density matrix becomes
\begin{align}
  \rho_{P}\left(0\right)=\begin{pmatrix}
v_{+}&0&0&u^*\\
0&w&y&0\\
0&y&w&0\\
u&0&0&v_{-}\label{rho_ss}\\
\end{pmatrix}
\end{align}
in the basis $\left\{\left|11\right\rangle,\left|00\right\rangle,\left|01\right\rangle,\left|10\right\rangle\right\}$, where
\begin{align}
  v_{\pm}&=\left(1\pm2\left\langle\sigma_{1z}\right\rangle+\left\langle\sigma_{1z}\sigma_{2z}\right\rangle\right)/4,\label{v_pm}\\
   w&=\left(1-\left\langle\sigma_{1z}\sigma_{2z}\right\rangle\right)/4,\\
  y&=\left\langle\sigma_{1+}\sigma_{2-}\right\rangle,\\
  u&=\left\langle\sigma_{1-}\sigma_{2-}\right\rangle.
\end{align}
For the one-axis twisted state, one could have \cite{Yin2012}
\begin{align}
  \left\langle\!\sigma_{z}\!\right\rangle&=-\cos^{N-1}\left(\frac{\theta}{2}\right),\\
  \left\langle\!\sigma_{1z}\!\sigma_{2z}\right\rangle&=\left(1+\cos^{N-2}\theta\right)/2,\\
  \left\langle\!\sigma_{1+}\!\sigma_{2-}\!\right\rangle&=\frac{1}{8} \left(1-\cos ^{N-2}\theta\right),\\
  \left\langle\!\sigma_{1-}\sigma_{2-}\!\right\rangle&=\!-\frac{1}{8} \!\left(\!1\!-\!\cos ^{N\!-\!2}\!\theta\!\right)\!-\!\frac{\text{i}}{2}\! \sin \! \left(\!\frac{\theta }{2}\!\right) \!\cos ^{N\!-\!2}\!\left(\!\frac{\theta }{2}\!\right).\label{sigma}
\end{align}

Then we discuss the spin-squeezing parameter defined by Ref. \cite{Kitagawa1993} and we rewrite it as follows:
\begin{align}
  \xi^2\equiv\frac{4\left(\Delta J_{\perp}\right)^2_{\text{min}}}{N}.
\end{align}
For our initial state, it can be simplify as
\begin{align}
  \xi^2=1+2\left(N-1\right)\left(\left\langle\sigma_{1+}\sigma_{2-}\right\rangle-\left|\left\langle\sigma_{1-}\sigma_{2-}\right\rangle\right|\right).
\end{align}

The entanglement of formation is defined as follows Ref. \cite{Bennett-PRA54.3824--3851-1996}, and the concurrence quantifying the entanglement of a pair of spin-$1/2$  is defined as
\begin{equation}
C\left(\rho\right)  =\text{max}\left\{ 0,\lambda_{1}-\lambda_{2}-\lambda_{3}-\lambda_{4}\right\}  ,\label{eq:C}
\end{equation}
where the $\lambda_{i}$s are the square roots of the eigenvalues, in decreasing order, of the non-Hermitian matrix
\begin{align}
  R\equiv \rho\left(t\right)\left(\sigma_{1y}\otimes\sigma_{2y}\right)\rho^{*}\left(t\right)\left(\sigma_{1y}\otimes\sigma_{2y}\right),\label{R}
\end{align}
and $\rho^{*}\left(t\right)$ denotes the complex conjugate of $\rho\left(t\right)$.

From Ref. \cite{Wang2003}, we know that if $\xi^2\leq1$ for even and odd states, then we gain the relationship between the spin-squeezing parameter and the rescaled concurrence: $\xi^2=1-\left(N-1\right)C$.
To deliberate on the relationship, we take
\begin{align}
  \xi^{\prime 2}\equiv1-C_r
\end{align}
for convenience, where $C_r=\left(N-1\right)C\left(\rho\left(t\right)\right)\label{Cr}$ is the rescaled concurrence.
Later, We will examine the relationship under the Lee-Yang dephasing channels shown in Sec.~\ref{Sec. IV}.
\section{Two Types of  Lee-Yang Dephasing Channels}\label{Sec. IV}
In this section, we introduce two different types of Lee-Yang dephasing channels shown in Fig.~1.
As shown in Fig.~1(a), dephasing channels are independent of each other, where the probe is only coupled to its own bath.
The dephasing channel has $N$ probes and each probe is equally coupled to all the $N_b$ bath spins [Fig.~1(b)].

\subsection{Dephasing channels where probes are coupled to their own bath}
We consider the dephasing channels where probes are only coupled to its own bath, and choose spin-squeezing probes (spin-$1/2$) and the Ising chains as the probes and the baths.

We select the standard one-axis twisted state as the initial state under the dephasing channels.
Therefore, we have the total Hamiltonian and the unitary matrix
\begin{align}
  H_{\Rmnum{1}}&=\sum_{k=1}^{N}{\left(H_{0,k}\otimes I_{2\times2}+\eta H_{1,k}\otimes\sigma_{k,z}\right)},\label{H_III}\\
U_{\Rmnum{1}}&= \text{e}^{-\text{i}H_{\Rmnum{1}}t}=\prod_{k=1}^{N}{U_k}=\prod_{k=1}^{N}{\text{e}^{-\text{i}H_{0,k}t}\text{e}^{-\text{i}\eta \sigma_{k,z}H_{1,k}t}},\label{U_III}
\end{align}
where $H_{0,k}$ and $H_{1,k}$ denote the Hamiltonian of the $k$th environment and the $k$th random field, respectively.
The probe spin lying in the bath is equally coupled to all the $N_b$ bath spins.

From Ref. \cite{Yin2012}, one could know that the evolution of any finite number of particles is governed only by the local Hamiltonian of the particles and their baths.
Hence, from Eqs.~\eqref{rho_t},~\eqref{eq:T11_1}-\eqref{eq:simplify_1},~\eqref{rho_ss} and~\eqref{H_III}-\eqref{U_III}, the evolution of the reduced initial state is
\begin{align}
  \rho_{\Rmnum{1}}\left(t\right)=\begin{pmatrix}
v_{+}&0&0&A_{+}^{2}u^*\\
0&w&A_{+}A_{-}y&0\\
0&A_{+}A_{-}y&w&0\\
A_{-}^{2}u&0&0&v_{-}\\
\end{pmatrix},
\end{align}
where
\begin{align}
  A_{\pm}&=\frac{Z\left(\beta,h\pm2\text{i}\eta t/\beta\right)}{Z\left(\beta,h\right)} \nonumber\\
  &=\frac{e^{2\text{i}N_b\eta t}\prod_{n=1}^{N_b}\left(e^{-2\beta \left(h\pm2\text{i}\eta t/\beta\right)}-z_{n}\right)}{\prod_{n=1}^{N_b}\left(e^{-2\beta h}-z_{n}\right)}\label{eq:Z_1}
\end{align}
is the analogous partition function.
If we consider the probe coupled to a ferromagnetic Ising bath under zero field ($h=0$), we can verify that the analogous partition function
\begin{align}
  A\left(\text{i}x\right)=\frac{e^{\text{i}\beta N_bx}\prod_{n=1}^{N_b}\left(\text{e}^{-2\text{i}\beta x}-\text{e}^{\text{i}\phi_n}\right)}{\prod_{n=1}^{N_b}\left(1-\text{e}^{\text{i}\phi_n}\right)}\label{A}
\end{align}
is a real even function (see Appendix \ref{Appendix_A}), where $x$ is real, implying the relationship $A^*\left(\text{i}x\right)=A\left(\text{i}x\right)=A\left(-\text{i}x\right)$.
By this way, we denote $A=A\left(\pm2\text{i}\eta t/\beta\right)$.

Therefore, we could rewrite the  evolution of the two-spin reduced density matrix in a block-diagonal form \cite{Wang2003}
\begin{align}
  \rho_{\Rmnum{1}}\left(t\right)=\begin{pmatrix}
v_{+}&A^{2}u^*\\
A^{2}u&v_{-}\\
\end{pmatrix}\oplus\begin{pmatrix}
w&A^{2}y\\
A^{2}y&w\\
\end{pmatrix},\label{rho_III}
\end{align}
in the basis $\left\{\left|11\right\rangle,\left|00\right\rangle,\left|01\right\rangle,\left|10\right\rangle\right\}$.
Comparing with the definition of the dephasing channel, one can easily find that $\rho_{\Rmnum{1}}\left(t\right)$ is quantum dephasing channels $\mathcal{E}_{\Rmnum{1}}\left(\rho_P\right)$, and the three Kraus operators are given by
\begin{gather}
 M_0\!=\!\sqrt{A^{2\!}}I, \!M_1\!=\!\sqrt{\!1\!-\!A^{2\!}}|0\rangle\langle0|,\!M_2\!=\!\sqrt{\!1\!-\!A^{2\!}}|1\rangle\langle1|,\label{Kraus1}
\end{gather}
where $I$ is the identity operator.
\subsection{Dephasing channel where $N$ probes are coupled to one bath together.}
We consider the dephasing channel where multiprobes are coupled to one $N_b$-spins bath together.
From the previous subsection, we can gain the total Hamiltonian and the unitary matrix
\begin{align}
H_{\Rmnum{2}}&=H_0\otimes I_{2N\times2N}+\eta H_1\otimes\sum^N_{k=1}\sigma_{k,z},\label{H_II}\\
U_{\Rmnum{2}}&= \text{e}^{-\text{i}H_{\Rmnum{2}}t}=\text{e}^{-\text{i}H_0t}\text{e}^{-\text{i}\eta \sum^N_{j=k}\sigma_{k,z}H_1t}.\label{U_II}
\end{align}

Here, we consider the two-qubit reduced state.
The density matrix of the general initial state is given by
\begin{align}
  \rho_{\Rmnum{2}}\left(0\right)=\left(\sum_{i,j=0}^{3}{\rho_{ij}\left|i\right\rangle\left\langle j\right|}\right)\otimes\rho_B\left(0\right),
\end{align}
in the basis $\left\{\left|11\right\rangle,\left|00\right\rangle,\left|01\right\rangle,\left|10\right\rangle\right\}$.
From Eqs.~\eqref{rho_t},~\eqref{eq:T11_1}-\eqref{eq:simplify_1} and~\eqref{H_II}-\eqref{U_II}, the reduced initial density matrix $\rho_P\left(0\right)$ evolves to
\begin{align}
  \rho_{\Rmnum{2}}\left(t\right)=\begin{pmatrix}
\rho_{11}&A\rho_{12}&A\rho_{13}&A^{\prime}\rho_{14}\\
A\rho_{21}&\rho_{22}&\rho_{23}&A\rho_{24}\\
A\rho_{31}&\rho_{32}&\rho_{33}&A\rho_{34}\\
A^{\prime}\rho_{41}&A\rho_{42}&A\rho_{43}&\rho_{44}\\
\end{pmatrix},
\end{align}
where $A^{\prime}=A\left(\pm4\text{i}\eta t/\beta\right)$ if we consider the system under zero field ($h=0$).
Because the relationship between $A$ and $A^{\prime}$ is uncertain, the evolution of a general initial state is not a quantum channel.
However, if we choose a standard one-axis twisted state as Eq.~\eqref{rho_ss}, the evolution of the reduced initial state is
\begin{equation}
\rho_{\Rmnum{2}}\left(t\right)=\begin{pmatrix}
v_{+}&0&0&A^{\prime}u^*\\
0&w&y&0\\
0&y&w&0\\
A^{\prime}u&0&0&v_{-}\\
\end{pmatrix}.\label{rho_II}
\end{equation}
We can find that $\rho_{\Rmnum{2}}\left(t\right)$  is a quantum dephasing channel $\mathcal{E}_P\left(\rho_{\Rmnum{2}}\right)$ and the Kraus operators are given by
\begin{gather}
    M_0=\sqrt{1-A^{\prime}}|00\rangle\langle00|, M_1=\sqrt{1-A^{\prime}}|11\rangle\langle11|,\nonumber\\
    M_{2}\!=\!\sqrt{\!A^{\prime}}\left(|00\rangle\!\langle00|\!+\!|11\rangle\!\langle11|\right)\!,\!M_{3}\!=\!|01\rangle\!\langle01|\!+\!|10\rangle\!\langle10|.\label{Kraus}
\end{gather}
\section{Multiprobes only coupled to their own bath}\label{Sec. V}
In the previous section, we have proposed the Lee-Yang dephasing channels and have proven that the probe(s)-bath system, whose ferromagnetic Ising bath is under zero field, is a Lee-Yang dephasing channel.
In this section, we calculate the coherence, the rescaled concurrence and the spin-squeezing parameter of multiprobes only respectively coupled to their own bath.
And we discuss the influence of the bath spins number $N_b$, the inverse temperature $\beta$, the probes number $N$ and the twisted angle $\theta$ on the them.

To take advantage of exchange symmetry, our initial state is one-axis twisted state shown as Eq.~\eqref{rho_ss}.
From the dephasing channels $\mathcal{E}_{\Rmnum{1}}\left(\rho_P\right)$ we showed in Sec.~\ref{Sec. IV},
the coherence of the system is
\begin{align}
  L\left(t\right)=A^2\left(\left|u\right|+\left|u^*\right|+2\left|y\right|\right)=2A^2\left(\left|u\right|+y\right).
\end{align}

Actually, it is difficult to show a tangible form of the time constant of the decay because the partition function $Z\left(\beta,h\right)$ cannot be gotten straightway.
Consequently, we choose the recovery time $t=\left(2n-1\right)T\left(A^2\right)/\left(2N_b\right)$ when the coherence commences the recovery from the minimum value.
The recovery time describes the strength of the decoherence and can be read from the figures.
To give a specific impression about our idea, we elaborate on it by the one-dimensional (1D) Ising model with nearest-neighbor ferromagnetic coupling $\lambda=1$, zero field ($h = 0$) and the periodic boundary condition.


\begin{figure}[htbp]
\begin{minipage}{0.49\linewidth}
\begin{overpic}[width=\linewidth,height=\linewidth]{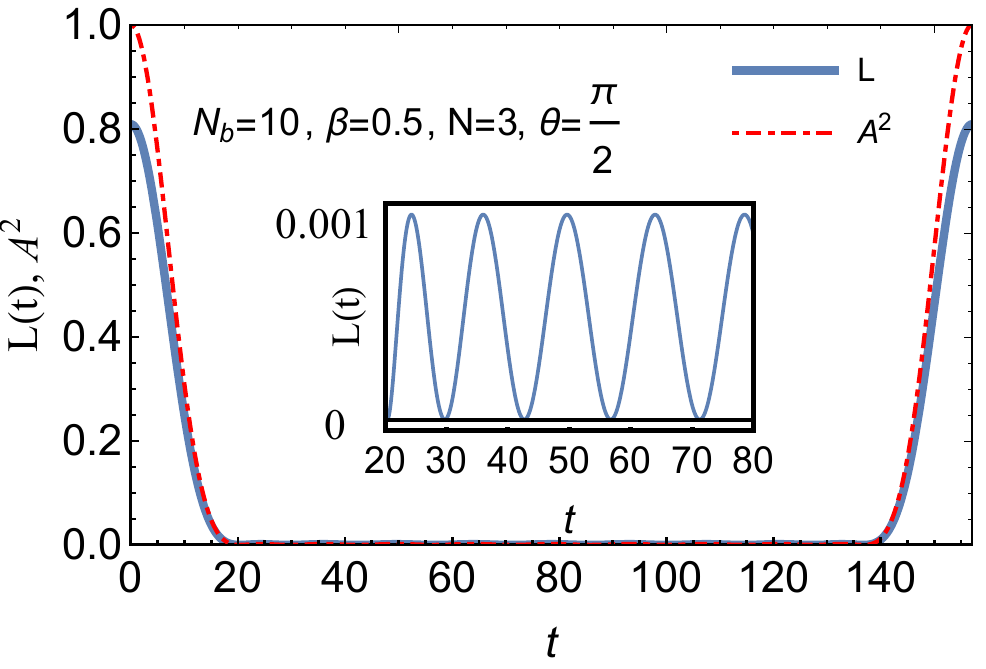}\label{fig2-a}
\put(0,102){$(a)$}
\end{overpic}
\end{minipage}
\begin{minipage}{0.49\linewidth}
\begin{overpic}[width=\linewidth,height=\linewidth]{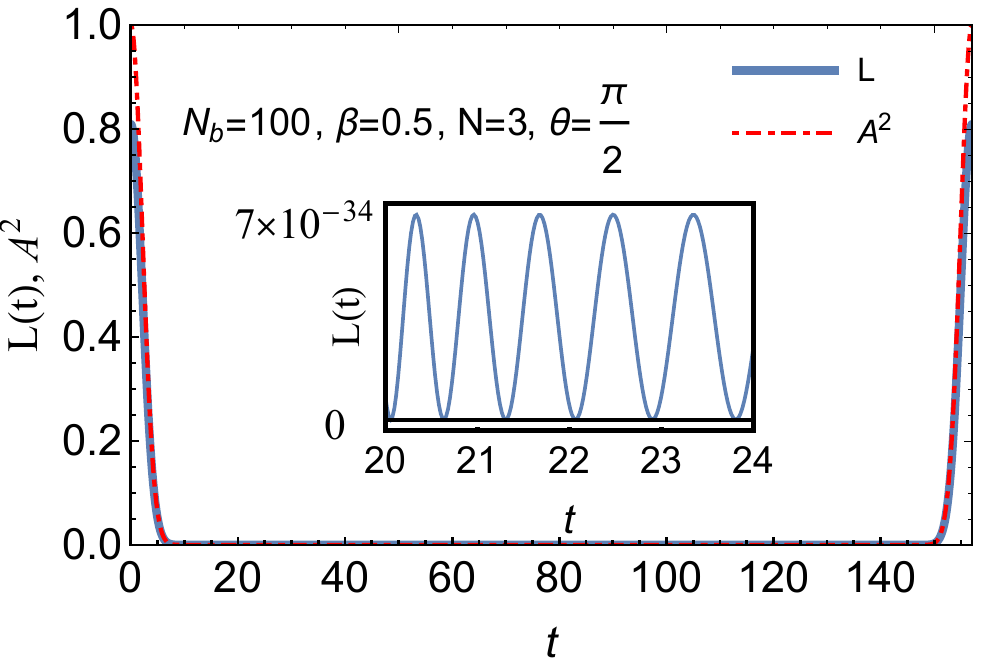}\label{fig2-b}
\put(0,102){$(b)$}
\end{overpic}
\end{minipage}
\begin{minipage}{0.49\linewidth}
\begin{overpic}[width=\linewidth,height=\linewidth]{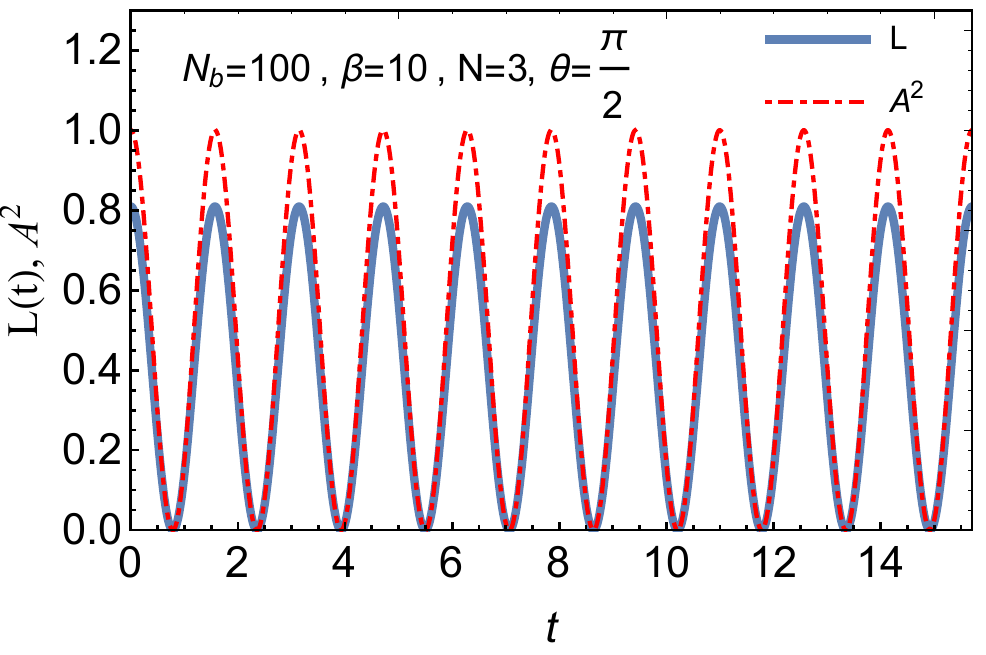}\label{fig2-c}
\put(0,102){$(c)$}
\end{overpic}
\end{minipage}
\begin{minipage}{0.49\linewidth}
\begin{overpic}[width=\linewidth,height=\linewidth]{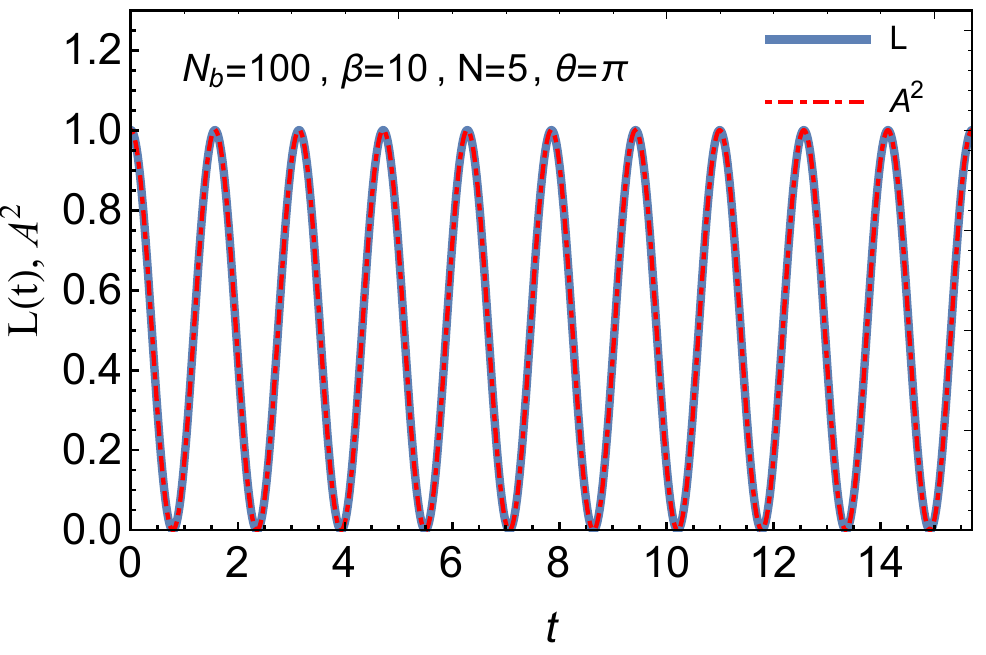}\label{fig2-d}
\put(0,102){$(d)$}
\end{overpic}
\end{minipage}
\protect\protect\caption{
Relationship between the Lee-Yang zeros and the coherence of the probes coupled to their own bath.
(a), (b) Coherence (blue line) and the analogous partition function $A^2$ (red dot dash line) for a 1D Ising model with the number of spins $N_b=10$ and $100$, respectively, for the inverse temperature $\beta=0.5$, the number of the probes $N=3$ and the twist angle $\theta=\pi/2$.
The coherence and the function $A^2$ with $N_b=100$ and $\beta=10$ for (c) $N=3$, $\theta=\pi/2$ and (d) $N=5$, $\theta=\pi$.
The insets in (a) and (b) zoom into the coherence zero.
The blue lines mark the coherences and the red dot dash lines mark the functions $A^2$.
The probes-bath coupling constant is $\eta=0.01$.}
 \label{fig2}
\end{figure}

Fig.~\ref{fig2} shows the Lee-Yang zeros and the coherence of the probes coupled to their own bath at various parameters.
At infinite temperature ($\beta=0$), all the Lee-Yang zeros are degenerate at $z_{n}=-1$ and we can easily find that the Lee-Yang zeros ($z_{n}$) only depend on the particle number ($N_b$) and the temperature ($\beta$ or $1/T$) .
As shown in Figs.~2(a) and 2(b), one can find that the coherence decreases to zero, and emerges $N_b$ peaks after the first Lee-Yang singularity.
With the bath size increasing toward the thermodynamic limit, the values of $N_b$ peaks decrease dramatically.
Obviously, the period of the coherence is $T\left(A^2\right)=2\pi/(4\eta)$.
From Eq.~\eqref{A}, supplied the uniformly distributed Lee-Yang zeros ($\phi_{n}\rightarrow\left(2n-1\right)\pi/N_b$, with $\beta\rightarrow\infty $), the coherence commences the recovery when the time satisfies $t=\left(2n-1\right)T\left(A^2\right)/\left(2N_b\right)$, where $n=1,2,3,\dots$ [Fig.~2(c)].
As shown in Fig.~2(d), the probes number $N$ and the twist angle $\theta$ only impact on the amplitude of the coherence.
Furthermore, one can find that times in correspondence to the Lee-Yang zeros are also the zeros of the coherence.

Now, we calculate the entanglement of the probe-bath systems.
As a prior knowledge, from Eqs.~\eqref{v_pm}-\eqref{sigma}, we know $\sqrt{v_+v_-}\geq\left|u\right|$, $v_\pm$ and $y=w>0$ are real.
Employing Eqs.~\eqref{R} and~\eqref{rho_III},
we immediately get the square roots of eigenvalues
\begin{align}
\lambda_{1,2}&=\sqrt{v_+v_-}\pm A^2\left|u\right|,\\
\lambda_{3,4}& =\left(1\pm A^2\right)y.
\end{align}
From Eq.~\eqref{eq:C}, we obtain the evolution of the concurrence as
\begin{align}
C\left(\rho\left(t\right)\right)=2\text{max}\left\{0,A^2\left|u\right|-y,A^2y-\sqrt{v_+v_-}\right\}.
\end{align}
From the result we have gained, the rescaled concurrence is
\begin{align}
  C_r=\text{max}\left\{0,A^2C_r\left(0\right)+2\left(N-1\right)\left(A^2-1\right)y\right\},
\end{align}
where $C_r\left(0\right)=2\left(N-1\right)\text{max}\left\{0,\left|u\right|-y\right\}$ is the initial concurrence.

To illustrate the above result, we test the one-dimensional (1D) Ising model with nearest-neighbor ferromagnetic coupling $\lambda=1$, zero field ($h = 0$) and the periodic boundary condition.

\begin{figure}[htbp]
\centering
\begin{minipage}{0.49\linewidth}
\begin{overpic}[width=\linewidth,height=\linewidth]{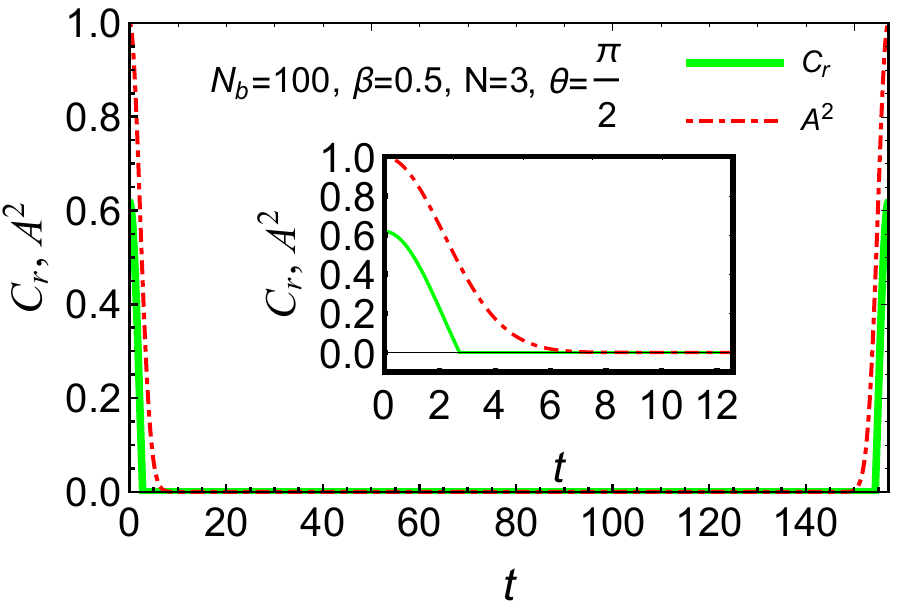}\label{fig3-a}
\put(0,102){$(a)$}
\end{overpic}
\end{minipage}
\begin{minipage}{0.49\linewidth}
\begin{overpic}[width=\linewidth,height=\linewidth]{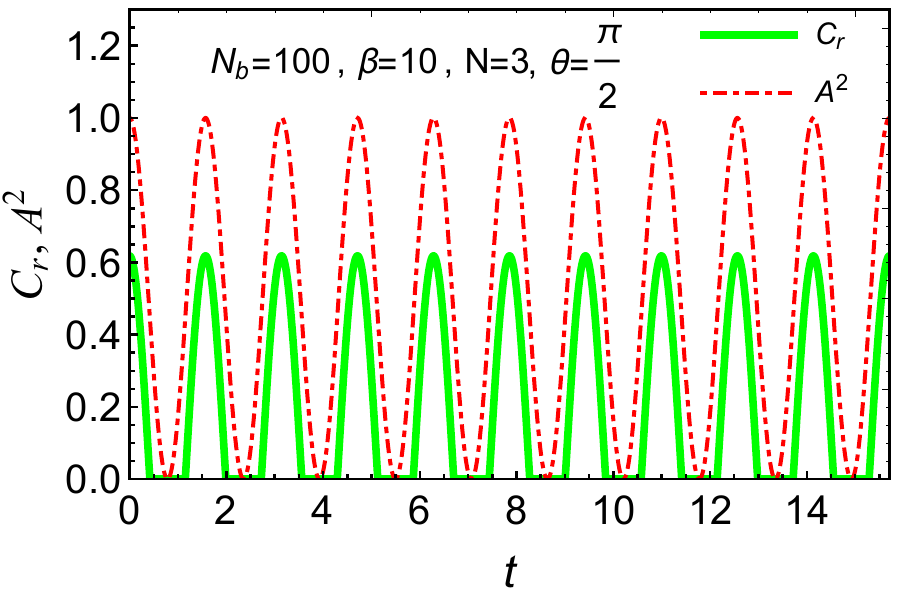}\label{fig3-b}
\put(0,102){$(b)$}
\end{overpic}
\end{minipage}
\begin{minipage}{0.49\linewidth}
\begin{overpic}[width=\linewidth,height=\linewidth]{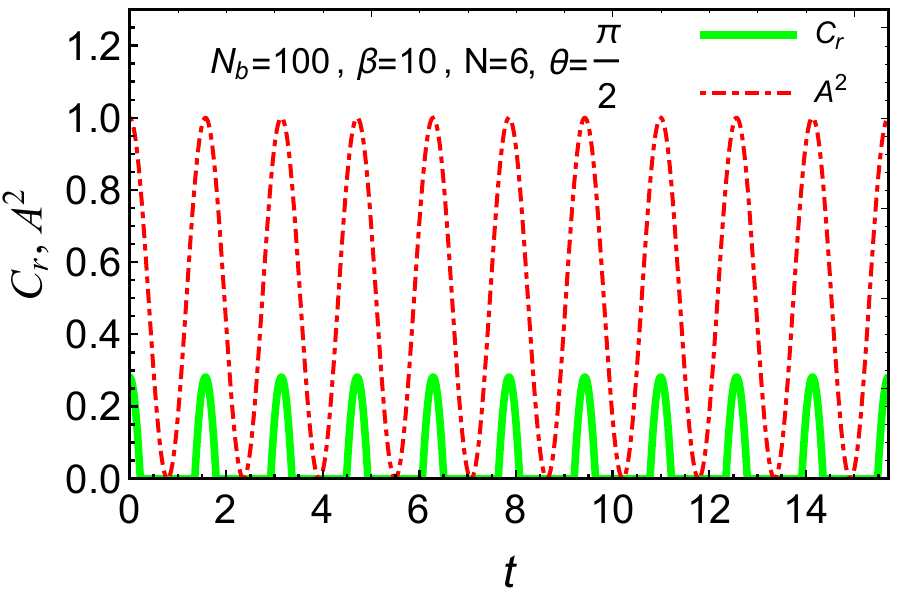}\label{fig3-c}
\put(0,102){$(c)$}
\end{overpic}
\end{minipage}
\begin{minipage}{0.49\linewidth}
\begin{overpic}[width=\linewidth,height=\linewidth]{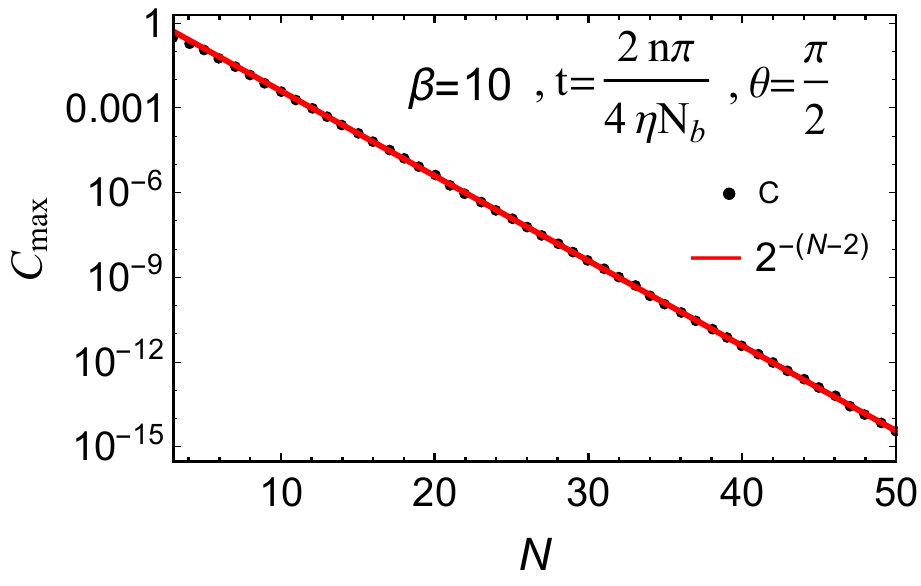}\label{fig3-d}
\put(0,102){$(d)$}
\end{overpic}
\end{minipage}
\protect\protect\caption{
Relationship between the Lee-Yang zeros and the rescaled concurrence of the probes coupled to their own bath with various inverse temperatures $\beta$ and the number of probes $N$.
The rescaled concurrence (green line) and the analogous partition function $A^2$ (red dot dash line) for (a) $\beta=0.5$, $N=3$, (b) $\beta=10$, $N=3$ and (c) $\beta=10$, $N=6$.
(d) Maximum original concurrence (black dots) and the fitting function (red line) at the inverse temperature $\beta=10$.
The probes-bath coupling is $\eta=0.01$, the number of the bath spins $N_b=100$ and the twist angle $\theta=\pi/2$.
}
\label{fig3}
\end{figure}
Fig.~\ref{fig3} shows the rescaled concurrence of the probes coupled to their own bath with probes-bath coupling $\eta=0.01$, $N_b=100$ bath spins and the twist angle $\theta=\pi/2$ at various temperatures $\beta$ and the number of probes $N$.
As seen in Fig.~3(a), under the thermodynamic limit, the rescaled concurrence decreases rapidly to zero with proper temperature $\beta$.
However, with low temperature, the rescaled concurrence emerges peaks at $t=nT\left(A^2\right)/N_b$, where $n=1,2,3,\dots$ [Fig.~3(b)].
The probes number $N$ and the twist angle $\theta$ only impact on the amplitude of the rescaled concurrence [Fig.~3(c)].
Times in correspondence to the Lee-Yang zeros are the centers of all the vanishing domains of the rescaled concurrence at low temperature.
As shown in Fig.~3(d), given the twist angle $\theta$, the maximum original concurrence is $C_{\text{max}}\left(\rho\right)= \exp\left[\alpha \left(N-2\right)\right]$, where $\alpha$ is a constant only depended on $\theta$.
With $\theta=\pi/2$ and low temperature, the maximum original concurrence is $C_{\text{max}}\left(\rho\right)= 2^{-\left(N-2\right)}$ at times satisfying $t=nT\left(A^2\right)/N_b$.

From Eqs.~\eqref{QC} and~\eqref{Kraus1}, the evolution of the matrix $O$ under the dephasing channel is
\begin{align}
\mathcal{E}_{\Rmnum{1}}\left(O\right)=\mathcal{E}_{\Rmnum{1}}^{\dagger}\left(O\right)=\begin{pmatrix}
O_{00}&A^2O_{01}\\
A^2O_{10} & O_{11}
\end{pmatrix},
\end{align}
and we find that
\begin{align}
  \mathcal{E}_{\Rmnum{1}}^{\dagger}\left(\sigma_z\right)&=\sigma_z\\
  \mathcal{E}_{\Rmnum{1}}^{\dagger}\left(\sigma_i\right)&=A^2\sigma_i\quad\text{for}\quad i=x,y.\label{E1}
\end{align}
From Eqs.~\eqref{O} and~\eqref{E1}, one have
\begin{align}
\left\langle\sigma_{1+}\sigma_{2-}\right\rangle&=A^{2}\left\langle\sigma_{1+}\sigma_{2-}\right\rangle_0,\\
\left\langle\sigma_{1-}\sigma_{2-}\right\rangle&=A^{2}\left\langle\sigma_{1-}\sigma_{2-}\right\rangle_0,
\end{align}
where $\left\langle O\right\rangle=\left\langle\psi\left(0\right)\right| O\left|\psi\left(0\right)\right\rangle$.

Since each probe is only coupled to its own bath, the independent and identical dephasing channels act separately on each probe spin.
Consequently, the spin-squeezing parameters are obtained as
\begin{align}
\xi^2&=1-2\left(N-1\right)A^2\left(\left|u\right|-y\right)=1-A^2C_r\left(0\right),\\
\xi^{\prime 2}&=\!1\!-\!\text{max}\left\{\!0,A^2C_r\!\left(0\right)\!+\!2\!\left(N\!-\!1\right)\!\left(A^2\!-\!1\right)\!y\right\}.
\end{align}
Therefore, under the dephasing channels, the spin-squeezing parameter is unbound by the relationship $\xi^2=1-C_r$ and the performance of spin squeezing is improved by
\begin{align}
\Delta\xi^2&\equiv \xi^{\prime 2}-\xi^2 \nonumber\\
&=2\left(N-1\right)\text{min}\left\{A^2\left(\left|u\right|-y\right),\left(1-A^2\right)y\right\}.
\end{align}
The maximum of the improvement
\begin{align}
\Delta\xi^2_{\text{max}}=2\left(N-1\right)\left(1-\frac{y}{\left|u\right|}\right)y
\end{align}
only depends on its initial state.

\begin{figure}[htbp]
\centering
\begin{minipage}{0.49\linewidth}
\begin{overpic}[width=\linewidth,height=\linewidth]{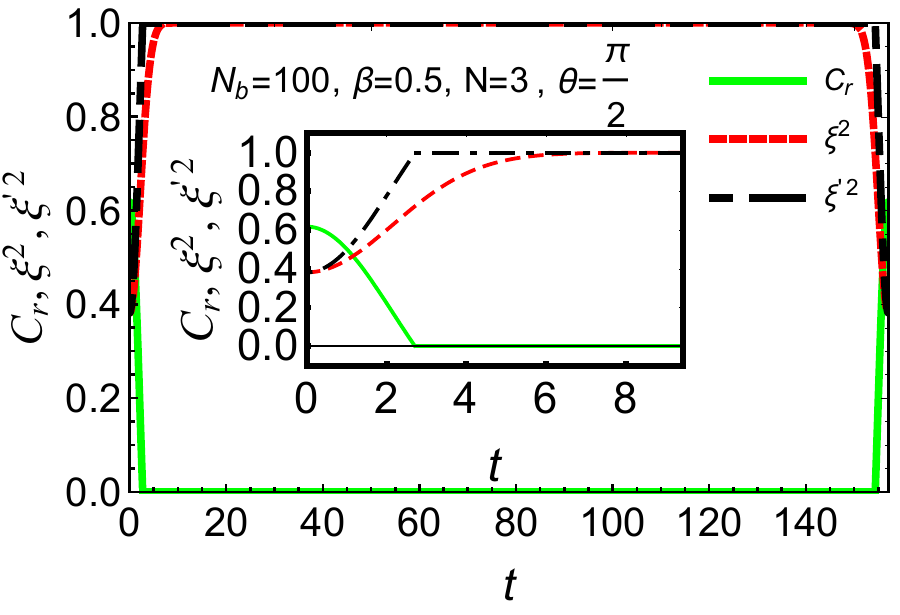}\label{fig4-a}
\put(0,102){$(a)$}
\end{overpic}
\end{minipage}
\begin{minipage}{0.49\linewidth}
\begin{overpic}[width=\linewidth,height=\linewidth]{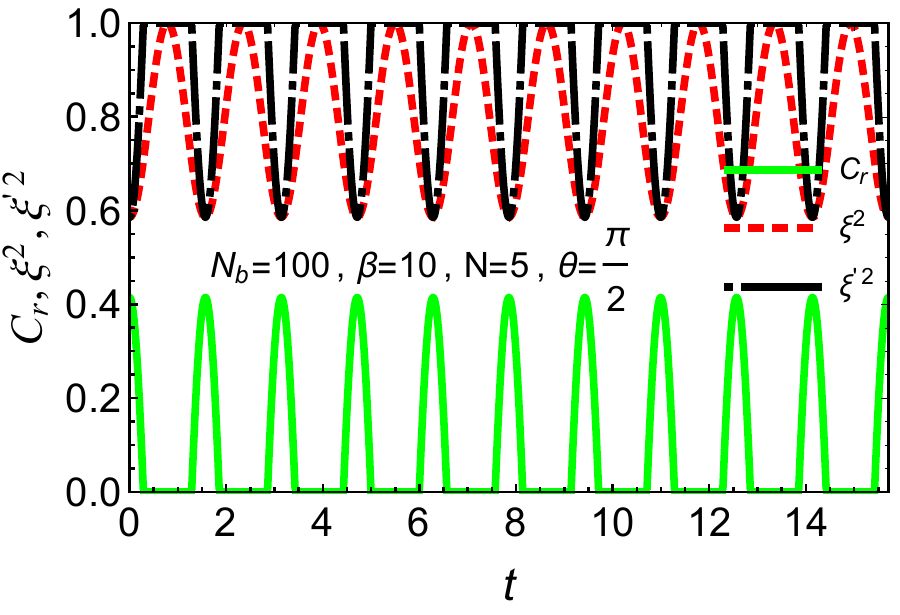}\label{fig4-b}
\put(0,102){$(b)$}
\end{overpic}
\end{minipage}
\protect\protect\caption{
Relationship between the rescaled concurrence and the spin-squeezing parameters of the probes coupled to their own bath with various inverse temperatures $\beta$ and the number of probes $N$.
The rescaled concurrence (green line) and the spin-squeezing parameters $\xi^2$ (red dotted line), $\xi^{\prime 2}$ (black dot dash line) for (a) $\beta=0.5$, $N=3$, (b) $\beta=10$, $N=5$.
The inset in (a) zooms into the initial time.
The probes-bath coupling is $\eta=0.01$, the number of the bath spins $N_b=100$ and the twist angle $\theta=\pi/2$.
}
\label{fig4}
\end{figure}
Fig.~\ref{fig4} shows the rescaled concurrence and the spin-squeezing parameters of the probes coupled to their own bath with probes-bath coupling $\eta=0.01$, $N_b=100$ bath spins and the twist angle $\theta=\pi/2$ at various temperatures $\beta$ and the number of probes $N$.
As seen in Fig.~4(a), the spin-squeezing parameter $\xi^2$ is always smaller then the corresponding $\xi^{\prime 2}$, which means it does not hold the relationship $\xi^2=1-C_r$ under decoherence.
Moreover, with temperature decreasing, the spin-squeezing parameter is recovered periodically at time satisfying $T\left(A^2\right)/N_b$ [Fig.~4(b)].

\section{Multiprobes Coupled to one Bath Together}\label{Sec. VI}
In this section, we obtain and analyse the coherence, the rescaled concurrence and the spin-squeezing parameter of probes coupled to a mutual many-body system.
We compare them of both systems we have mentioned.

To take advantage of exchange symmetry, our initial state is one-axis twisted state shown as Eq.~\eqref{rho_ss}.
From the dephasing channel $\mathcal{E}_{\Rmnum{2}}\left(\rho_P\right)$ we showed in Sec.~\ref{Sec. IV}, one can gain the coherence
\begin{align}
L\left(t\right)=2\left(\left|A^{\prime}u\right|+y\right).
\end{align}

Employing Eqs.~\eqref{R} and~\eqref{rho_II}, we obtain the square roots of eigenvalues as follows:
\begin{align}
\lambda_{1,2}&=\sqrt{v_+v_-}\pm \left|A^{\prime}u\right|,\\
\lambda_{3}& =2y,\\
\lambda_4&=0.
\end{align}
From Eq.~\eqref{eq:C}, we obtain the evolution of the rescaled concurrence as
\begin{align}
C_r&=2\left(N-1\right)\text{max}\left\{0,\left|A^{\prime}u\right|-y,y-\sqrt{v_+v_-}\right\}\nonumber\\
&=\text{max}\left\{0,\left|A^{\prime}\right|C_r\left(0\right)+2\left(N-1\right)\left(\left|A^{\prime}\right|-1\right)y\right\},
\end{align}
where $C_r\left(0\right)=2\left(N-1\right)\text{max}\left\{0,\left|u\right|-y\right\}$ is the initial concurrence.

From Eqs.~\eqref{QC},~\eqref{O} and~\eqref{Kraus}, the evolution of the matrix $O$ under the dephasing channel is
\begin{align}
\mathcal{E}_{\Rmnum{2}}\left(O\right)=\mathcal{E}_{\Rmnum{2}}^{\dagger}\left(O\right)=\begin{pmatrix}
O_{11}&0&0&A^{\prime}O_{14}\\
0&O_{22}&O_{23}&0\\
0&O_{32}&O_{33}&0\\
A^{\prime}O_{41}&0&0&O_{44}\\
\end{pmatrix},
\end{align}
and we find that
\begin{align}
\left\langle\sigma_{1+}\sigma_{2-}\right\rangle&=\left\langle\sigma_{1+}\sigma_{2-}\right\rangle_0,\\
\left\langle\sigma_{1-}\sigma_{2-}\right\rangle&=A^{\prime}\left\langle\sigma_{1-}\sigma_{2-}\right\rangle_0.
\end{align}
Then we obtain the evolution of the spin-squeezing parameter
\begin{align}
\xi^2=\xi^{\prime 2}=1-C_r=1-2\left(N-1\right)\left(\left|A^{\prime}u\right|-y\right).
\end{align}
That means decoherence does not destroy the relationship $\xi^2+C_r=1$ during the whole evolution.
Analytical results for time-evolution of all relevant coherences, rescaled concurrences, expectations and spin-squeezing parameters, chosen a one-axis twisted state as the initial state, are given in Table.~\ref{Table}.

\begin{table*}[htbp]
	\centering
	\caption{Analytical results for time-evolution of all relevant coherences, rescaled concurrences, expectations and spin-squeezing parameters, chosen the one-axis twisted state as the initial state.}
	\begin{tabular}{cp{7cm}<{\centering}p{7cm}<{\centering}}
		\toprule[1pt]
\specialrule{0em}{2pt}{2pt}
		&Dephasing channels &Dephasing channel \\
&where multiprobes are coupled to their own bath&where multiprobes are coupled to one bath together\\
\specialrule{0em}{2pt}{2pt}	
\midrule[0.5pt]
\specialrule{0em}{2pt}{2pt}
		$L\left(t\right)$&$2A^2\left(\left|u\right|+y\right)$&$2\left(\left|A^{\prime}u\right|+y\right)$\\
\specialrule{0em}{2pt}{2pt}
$Cr$&$\text{max}\left\{0,A^2C_r\left(0\right)+2\left(N-1\right)\left(A^2-1\right)y\right\}$&$\text{max}\left\{0,\left|A^{\prime}\right|C_r\left(0\right)+2\left(N-1\right)\left(\left|A^{\prime}\right|-1\right)y\right\}$\\
\specialrule{0em}{2pt}{2pt}
$\langle\sigma_{1z}\sigma_{2z}\rangle$&$\langle\sigma_{1z}\sigma_{2z}\rangle_0$&$\langle\sigma_{1z}\sigma_{2z}\rangle_0$\\
\specialrule{0em}{2pt}{2pt}
$\langle\sigma_{1+}\sigma_{2-}\rangle$&$\langle\sigma_{1+}\sigma_{2-}\rangle_0$&$\langle\sigma_{1+}\sigma_{2-}\rangle_0$\\
\specialrule{0em}{2pt}{2pt}
$\langle\sigma_{1-}\sigma_{2-}\rangle$&$A\langle\sigma_{1-}\sigma_{2-}\rangle_0$&$A^{\prime}\langle\sigma_{1-}\sigma_{2-}\rangle_0$\\
\specialrule{0em}{2pt}{2pt}
$\xi^2$&$1-A^2C_r\left(0\right)$ &$1-C_r$\\
\specialrule{0em}{2pt}{2pt}
		\bottomrule[1pt]
	\end{tabular}\label{Table}
\end{table*}

\begin{figure}[htbp]
\centering
\begin{minipage}{0.49\linewidth}
\begin{overpic}[width=\linewidth,height=\linewidth]{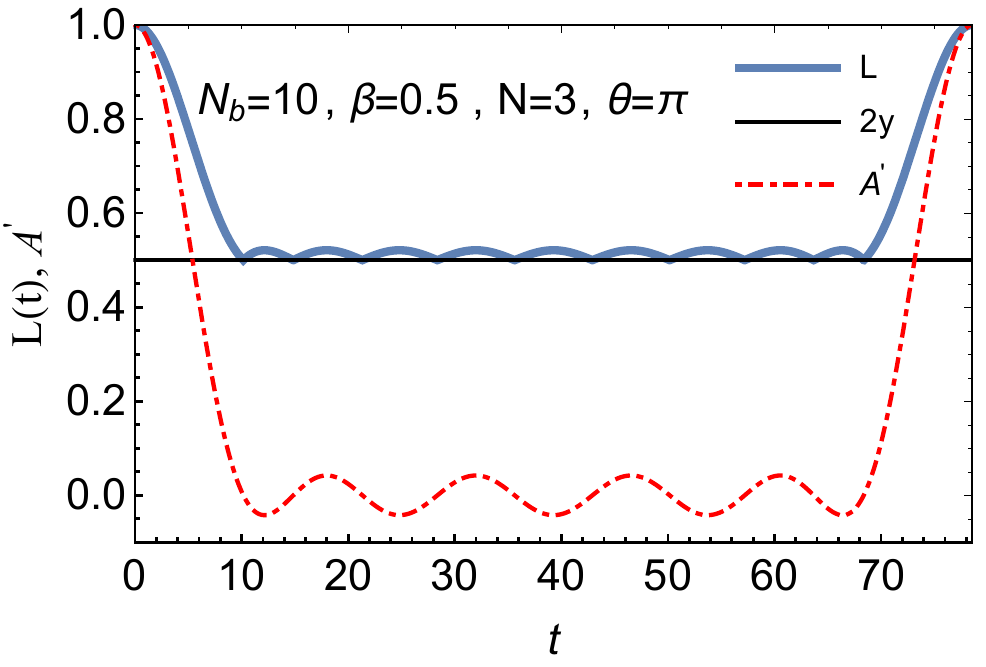}\label{fig4-a}
\put(0,102){$(a)$}
\end{overpic}
\end{minipage}
\begin{minipage}{0.49\linewidth}
\begin{overpic}[width=\linewidth,height=\linewidth]{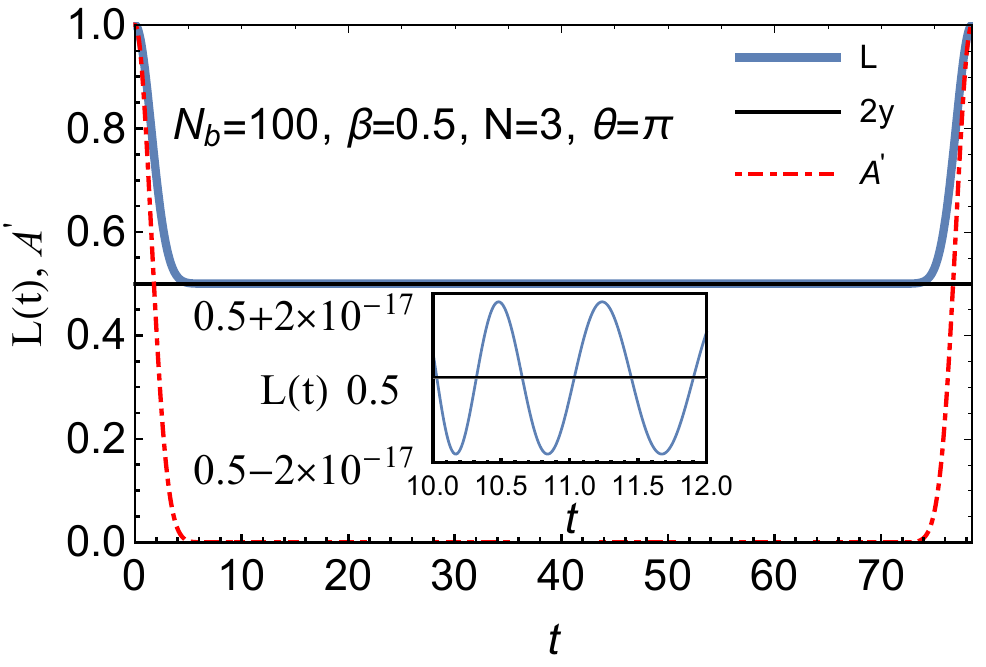}\label{fig4-b}
\put(0,102){$(b)$}
\end{overpic}
\end{minipage}
\begin{minipage}{0.49\linewidth}
\begin{overpic}[width=\linewidth,height=\linewidth]{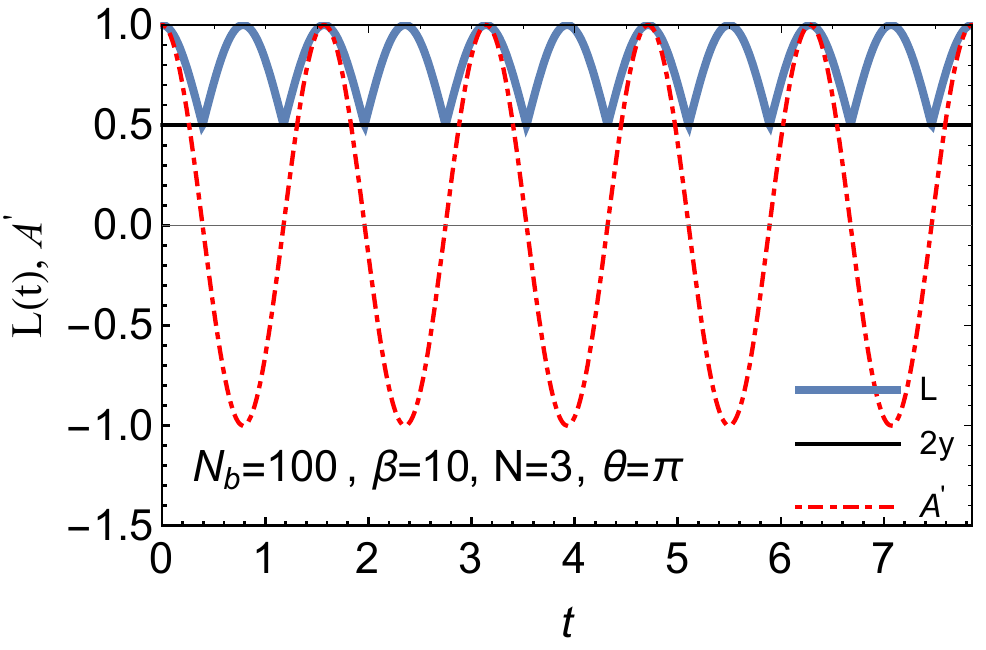}\label{fig4-c}
\put(0,102){$(c)$}
\end{overpic}
\end{minipage}
\begin{minipage}{0.49\linewidth}
\begin{overpic}[width=\linewidth,height=\linewidth]{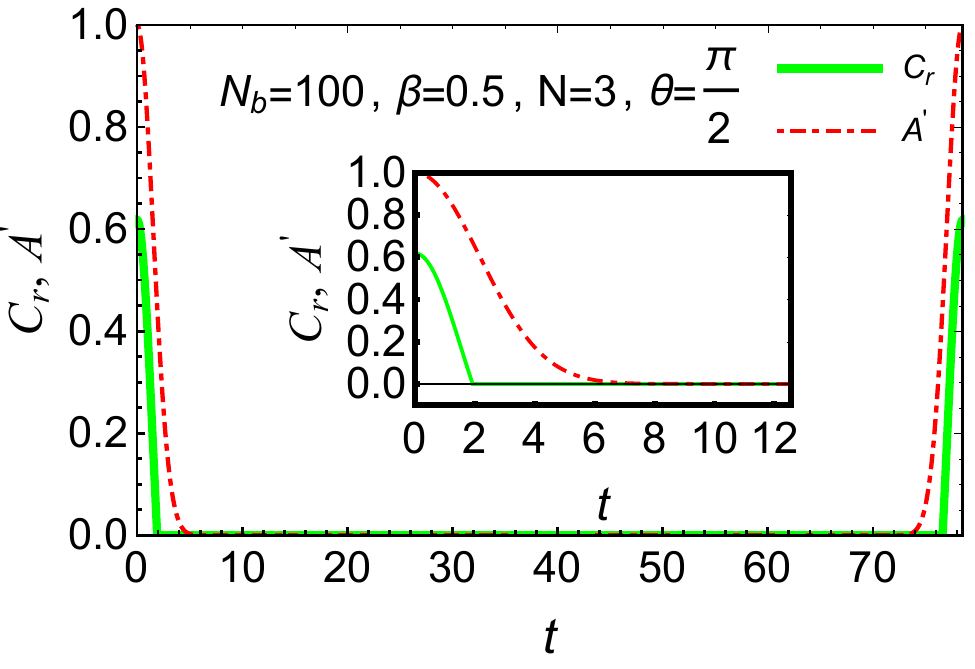}\label{fig4-d}
\put(0,102){$(d)$}
\end{overpic}
\end{minipage}
\begin{minipage}{0.49\linewidth}
\begin{overpic}[width=\linewidth,height=\linewidth]{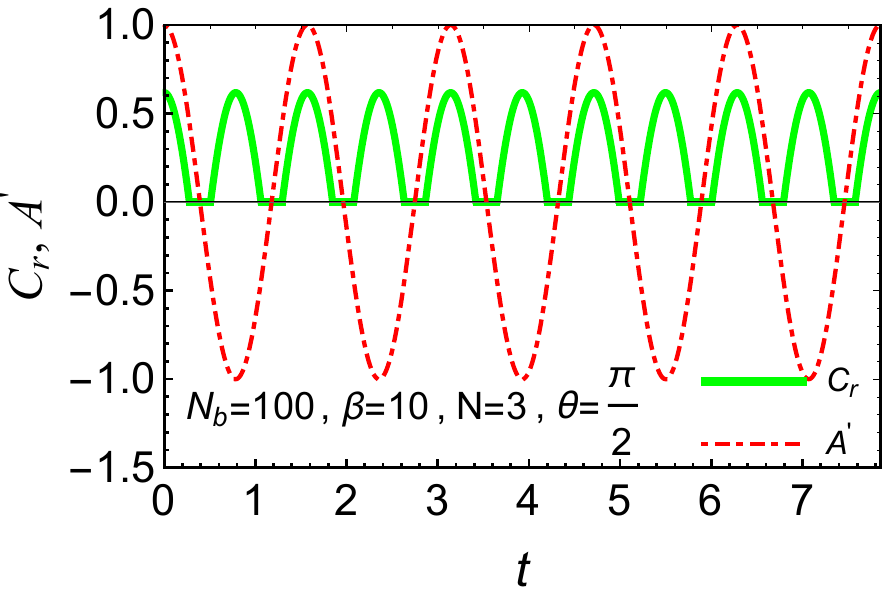}\label{fig4-e}
\put(0,102){$(e)$}
\end{overpic}
\end{minipage}
\begin{minipage}{0.49\linewidth}
\begin{overpic}[width=\linewidth,height=\linewidth]{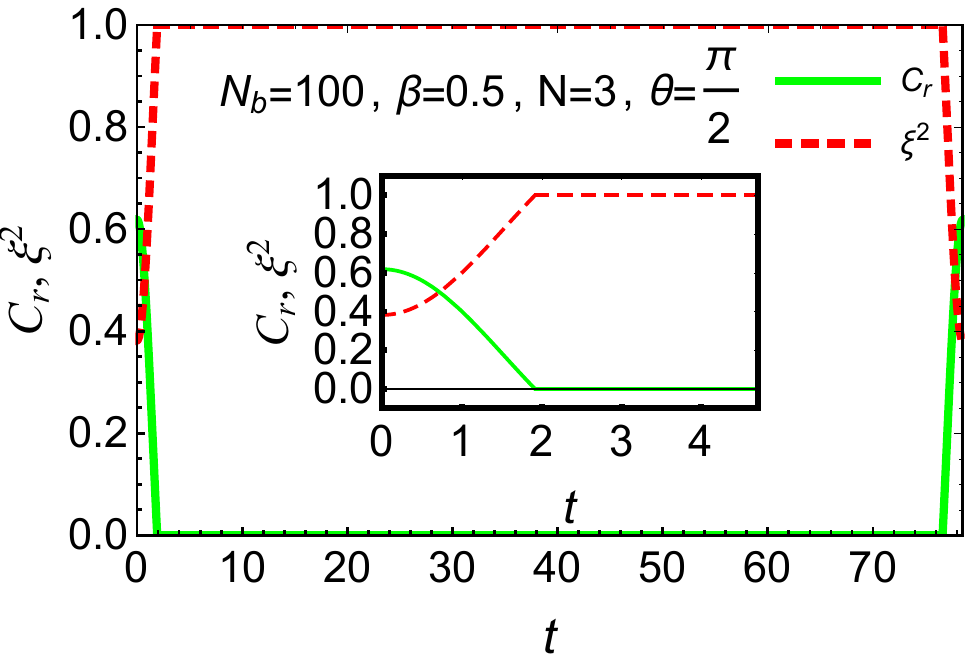}\label{fig4-f}
\put(0,102){$(f)$}
\end{overpic}
\end{minipage}
\protect\protect\caption{
Coherence, rescaled concurrence and spin-squeezing parameters of the probes coupled to one bath together.
The coherences (blue line) and the analogous partition functions $A^{\prime}$ (red dot dash line) with the twist angle $\theta=\pi$ for (a) $N_b=10$, $\beta=0.5$, (b) $N_b=100$, $\beta=0.5$ and (c) $N_b=100$, $\beta=10$.
The rescaled concurrences (green line) and the analogous partition functions $A^{\prime}$ (red dot dash line) with $N_b=100$ and $\theta=\pi/2$ for (d) $\beta=0.5$ and (e) $\beta=10$.
(f) The rescaled concurrences (green line) and the spin-squeezing parameters $\xi^2$ (red dotted line) for $N_b=100$, $\beta=0.5$ and $\theta=\pi/2$.
The insets of (b), (d) and (f) zoom into the coherence zero and the initial time, respectively.
The probes-bath coupling is $\eta=0.01$ and the number of probes $N=3$.}
\label{fig5}
\end{figure}

We show our results by the one-dimensional (1D) Ising model with nearest-neighbor ferromagnetic coupling $\lambda\!=\!1$, zero field ($h\!=\!0$) and the periodic boundary condition.
Fig.~\ref{fig5} shows the coherence, the rescaled concurrence and the spin-squeezing parameter of the probes coupled to one bath together.
As shown in Figs.~5(a) and~5(b), with appropriate temperature, one can find that the coherence is asymptotic to a stable value $\!\left(1\!-\!\cos^{N-2}\theta\right)\!/4$ because the analogous partition function $A^{\prime}$ becomes almost zero after the first Lee-Yang singularity.
With the bath size increasing toward the thermodynamic limit, one can gain the stable minimum value of the coherence in almost the whole time domain.
From Eq.~\eqref{A}, when temperature is low enough ($\beta\!\rightarrow\!\infty\!$), the coherence commences the recovery from the minimum value when the time satisfies $t\!=\!\left(2n\!-\!1\right)\!T\!\left(A^{\prime}\right)\!/\!\left(\!2N_b\!\right)$, where $\!T\!\left(A^{\prime}\right)\!=\!T\!\left(A^{2}\right)\!/2$ and $n=1,2,3,\dots$ [Fig.~5(c)].
As seen in Figs.~5(d) and~5(e), the rescaled concurrence shares the same properties with the correspondence in Sec.~\ref{Sec. V}, except for the double period.
Times in correspondence to the Lee-Yang zeros are the centers of all the vanishing domains of the rescaled concurrence at low temperature.
As shown in Fig.~5(f), one can find that the relationship $\xi^2\!=\!1\!-\!C_r$ is preserved  under decoherence.

\section{Conclusion and discussion}\label{Sec. VII}
In this work, we propose two different types of Lee-Yang dephasing channels: (a) probes coupled to their own bath and (b) probes coupled to one bath together.
We prove that the probe(s)-bath system whose ferromagnetic Ising bath is under zero field is a Lee-Yang dephasing channel.
Under the Lee-Yang dephasing channels, we obtain the coherence, the concurrence and the spin-squeezing parameter, with nearest-neighbor ferromagnetic coupling $\lambda\!=\!1$ and the periodic boundary condition.
In the first place, one can find that the coherence decreases to zero (minimum, under the latter channel) and emerges certain peaks after the first Lee-Yang singularity, of which values decrease dramatically with the bath size increasing toward the thermodynamic limit.
The coherence could commence a recovery at certain discrete times if the temperature is low enough.
Furthermore, one can find that times in correspondence to the Lee-Yang zeros are also the zeros of the coherence.
Secondly, the concurrence shares almost the same properties in both channels.
The centers of its vanishing domains are corresponding to the Lee-Yang zeros.
Besides, at the given twist angle, the maximum original concurrence only depends on the number of probes.
Finally, we find that the performance of spin squeezing is improved and its maximum only depends on the initial state under the first dephasing channels.
However, the corresponding performance is unimproved under the second dephasing channel.

All we have discussed above are in the ferromagnetic Ising models.
For other systems (e.g., antiferromagnetic Ising models), the Lee-Yang zeros may not lie on a unit circle.
However, according to Eq.~\eqref{eq:Z_1}, one can employ an external field $h$ and gain all the zeros of modulus $\exp\left(-2\beta h\right)$.
There is a  one-to-one mapping between the zeros (centers of vanishing domains) of coherence (concurrence) and the Lee-Yang zeros as the temperature approaches zero.
Through measuring the quantum coherence and concurrence, one can gain the information about the corresponding Lee-Yang zeros.
With the Lee-Yang zeros determined, the partition function of an intriguing many-body system can be rebuilt and one can obtain all the properties of the system.
Since all physical properties can be acquired, our results provide a new method to investigate many-body physics and extend a new perspective of the relationship between the entanglement and spin squeezing in probes-bath systems.

\section{Acknowledgments}

This work was supported by the National Key Research and Development Program of China (No.~2017YFA0304202 and No.~2017YFA0205700), the NSFC (Grants No.~11875231 and No.~11935012),
and the Fundamental Research Funds for the Central Universities through
Grant No.~2018FZA3005.

\begin{appendix}
\section{Proof of the property of the partition function}\label{Appendix_A}
Here, we prove that the partition function is a real and even function in a ferromagnetic Ising bath under zero field ($h\! =\! 0$).
To achieve our goal, we first give the following lemma.
\begin{lemma}
Under the periodic boundary condition, if $z=\exp{\left(\text{i}\phi\right)}$ is a zero of the partition function $Z\left(\beta,h\right)$, its complex conjugate $z^{*}=\exp{\left(-\text{i}\phi\right)}$ is also a zero.
\end{lemma}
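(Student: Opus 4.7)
The plan is to reduce the lemma to an elementary fact about polynomials with real coefficients, and then to confirm the same conclusion via the palindromic structure that the periodic boundary condition supplies. The key object is the polynomial $P(z)\equiv\sum_{n=0}^{N_b}f_{n}z^{n}$, whose zeros are exactly the $z_n=e^{i\phi_n}$ of the lemma.

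First, I would note that each coefficient $f_n$ is, by its definition, the zero-field partition function restricted to the sector with $n$ spins equal to $-1$. This is a finite sum of strictly positive Boltzmann weights, hence $f_n\in\mathbb{R}$. Consequently $P(z)$ has real coefficients, so $\overline{P(z)}=P(\bar z)$ for every $z\in\mathbb{C}$. If $z=e^{i\phi}$ is a zero of $Z(\beta,h)$ then it is a zero of $P$, and applying complex conjugation yields $P(e^{-i\phi})=\overline{P(e^{i\phi})}=0$. Since the overall prefactor $e^{\beta N_b h}$ never vanishes, $z^{*}=e^{-i\phi}$ is then a zero of $Z(\beta,h)$ as claimed.

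As a sanity check, I would also present the palindromic route which makes the role of the periodic boundary condition explicit. Periodic identification renders the zero-field Ising Hamiltonian $\sum_{i,j}\lambda_{ij}s_i s_j$ invariant under the global spin-flip $s_i\to -s_i$, which bijectively maps configurations with $n$ down spins to those with $N_b-n$ down spins while preserving the energy. Thus $f_n=f_{N_b-n}$, and the substitution $n\mapsto N_b-n$ in the sum gives the reciprocal identity $P(z)=z^{N_b}P(1/z)$. Therefore the zeros of $P$ come in reciprocal pairs $(z_0,1/z_0)$, and for any zero on the unit circle we have $1/z_0=z_0^{*}$, recovering the lemma.

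I do not anticipate a real obstacle here; the only points requiring care are justifying the reality (and thus the complex-conjugation symmetry) of the polynomial coefficients, and, in the alternative argument, pointing out that it is precisely the periodic boundary condition that guarantees the global spin-flip invariance responsible for $f_n=f_{N_b-n}$. Both routes close in a few lines, and together they also make transparent why the unit-circle theorem is the natural setting for the conjugate-pairing statement.
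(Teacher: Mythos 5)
Your proposal is correct, and your second (``palindromic'') argument is essentially the paper's own proof: the paper splits $\sum_n f_n z^n$ into the pairs $(n,\,N_b-n)$ using $f_n=f_{N_b-n}$ (treating even and odd $N_b$ separately) and multiplies the vanishing sum by $e^{-\text{i}N_b\phi}=z^{-N_b}$, which is exactly your reciprocal identity $P(z)=z^{N_b}P(1/z)$ evaluated at a unit-modulus zero, where $1/z=z^{*}$. Your first argument, via reality of the coefficients, is genuinely different and strictly stronger: since each $f_n$ is a finite sum of positive Boltzmann weights, $\overline{P(z)}=P(\bar z)$, so \emph{all} zeros come in conjugate pairs, whether or not they lie on the unit circle, and without invoking $f_n=f_{N_b-n}$ at all. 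What the reciprocal route buys instead is the pairing $z_0\leftrightarrow 1/z_0$, which coincides with conjugation only for $|z_0|=1$; since the lemma writes the zero as $e^{\text{i}\phi}$ with $\phi$ real, both arguments close. One small caveat on attribution: the global spin-flip invariance that yields $f_n=f_{N_b-n}$ holds for any zero-field Ising energy $\sum_{i,j}\lambda_{ij}s_is_j$ (it is quadratic in the spins) and is not conferred by periodic boundary conditions; the paper itself uses ``periodic boundary condition'' loosely as a name for this coefficient symmetry, and your write-up inherits that looseness without affecting the validity of either route.
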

\begin{proof}
We first consider $N_b$ is even, since the periodic boundary condition $f_n=f_{N_b-n}$, we gain
\begin{align}
\sum_{n=0}^{N_b}{f_nz^n}&=\sum_{n=0}^{\frac{N_b}{2}-1}{f_n\left(z^n+z^{N_b-n}\right)+f_{\frac{N_b}{2}}z^{\frac{N_b}{2}}}\nonumber\\
&=\sum_{n=0}^{\frac{N_b}{2}-1}{f_n\left(\text{e}^{n\text{i}\phi_n}+\text{e}^{\left(N_b-n\right)\text{i}\phi_n}\right)+f_{\frac{N_b}{2}}\text{e}^{\frac{N_b}{2}\text{i}\phi_n}}\nonumber\\
&=0.
\end{align}
Multiplying both sides of the above equation by the factor $\exp{\left(-N_b\text{i}\phi_n\right)}$, we have
\begin{align}
  \sum_{n=0}^{\frac{N_b}{2}-1}{f_n\left(\text{e}^{\left(N_b-n\right)\left(-\text{i}\phi_n\right)}+\text{e}^{n\left(-\text{i}\phi_n\right)}\right)+f_{\frac{N_b}{2}}\text{e}^{\frac{N_b}{2}\left(-\text{i}\phi_n\right)}}=0.
\end{align}
Therefore, $z^{*}=\exp{\left(-\text{i}\phi_n\right)}$ are the roots of the partition function $Z\left(\beta,h\right)$.

Then, we consider $N_b$ is odd, we have
\begin{align}
\sum_{n=0}^{N_b}{f_nz^n}&=\sum_{n=0}^{\frac{N_b-1}{2}}{f_n\left(z^n+z^{N_b-n}\right)}\nonumber\\
&=\sum_{n=0}^{\frac{N_b-1}{2}}{f_n\left(\text{e}^{n\text{i}\phi_n}+\text{e}^{\left(N_b-n\right)\text{i}\phi_n}\right)}\nonumber\\
&=0.
\end{align}
Employing the same skill, we can find that $z^{*}=\exp{\left(-\text{i}\phi_n\right)}$ are the roots of the partition function $Z\left(\beta,h\right)$.
\end{proof}

It explains the fact that distribution of  the Lee-Yang zeros are symmetric about the real number axis.
From the Lemma, we have the following proposition.
\begin{proposition}
Under the periodic boundary condition, the partition function $Z\left(\beta,\text{i}x\right)$ is a real and even function implying the relationship $Z^*\left(\beta,\text{i}x\right)=Z\left(\beta,\text{i}x\right)=Z\left(\beta,-\text{i}x\right)$, where $x$ is real.
\end{proposition}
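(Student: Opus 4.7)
The plan is to work directly from the polynomial expansion of the partition function,
\begin{equation*}
Z(\beta,h)=e^{\beta N_b h}\sum_{n=0}^{N_b}f_n z^n,\qquad z=e^{-2\beta h},
\end{equation*}
rather than from the factorized form, because the coefficient symmetry $f_n=f_{N_b-n}$ is the cleanest ingredient to exploit. Substituting $h=\mathrm{i}x$ with $x\in\mathbb{R}$ gives $z=e^{-2\mathrm{i}\beta x}$ and therefore
\begin{equation*}
Z(\beta,\mathrm{i}x)=\sum_{n=0}^{N_b}f_n\, e^{\mathrm{i}\beta(N_b-2n)x},
\end{equation*}
so the whole exponential prefactor $e^{\mathrm{i}\beta N_b x}$ gets absorbed into a symmetric integer spectrum of frequencies $N_b-2n$.

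Next I would pair the term $n$ with the term $N_b-n$ and use the periodic-boundary identity $f_n=f_{N_b-n}$. The two corresponding exponents are opposite in sign, so each such pair contributes
\begin{equation*}
f_n\bigl(e^{\mathrm{i}\beta(N_b-2n)x}+e^{-\mathrm{i}\beta(N_b-2n)x}\bigr)=2f_n\cos\!\bigl[\beta(N_b-2n)x\bigr].
\end{equation*}
Summing these pairs yields a finite sum of real cosines in $x$, which is manifestly both real and even. The claims $Z^*(\beta,\mathrm{i}x)=Z(\beta,\mathrm{i}x)$ and $Z(\beta,\mathrm{i}x)=Z(\beta,-\mathrm{i}x)$ follow at once.

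The only minor subtlety is the case distinction between even and odd $N_b$: when $N_b$ is even, the middle index $n=N_b/2$ pairs with itself and contributes the single real term $f_{N_b/2}$ (with zero frequency), while for odd $N_b$ every index is strictly paired with a distinct partner. Either way, no term obstructs reality or evenness, so this bookkeeping is the main (and essentially trivial) obstacle. As an alternative route, one could invoke the Lemma directly: group the factors $(z-z_n)(z-z_n^*)$ in the product representation and check that on $|z|=1$ each such pair is a real and $x\!\to\!-x$ symmetric function; however the coefficient-symmetry route above is shorter and avoids re-deriving properties already captured by $f_n=f_{N_b-n}$.
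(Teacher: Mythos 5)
Your proof is correct, but it takes a genuinely different route from the paper's. The paper proves the Proposition from the \emph{factorized} form of the partition function: it invokes the Lemma on conjugate zeros to establish $\prod_n(a+b\mathrm{e}^{\mathrm{i}\phi_n})=\prod_n(a+b\mathrm{e}^{-\mathrm{i}\phi_n})$, then computes the ratios $Z^*(\beta,\mathrm{i}x)/Z(\beta,\mathrm{i}x)$ and $Z(\beta,\mathrm{i}x)/Z^*(\beta,\mathrm{i}x)$, shows both equal $\prod_n(-\mathrm{e}^{-\mathrm{i}\phi_n})$, and concludes equality of numerator and denominator (and similarly for $x\to-x$). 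You instead work from the polynomial expansion and the coefficient symmetry $f_n=f_{N_b-n}$, absorbing the prefactor into the symmetric frequency spectrum $N_b-2n$ and rewriting $Z(\beta,\mathrm{i}x)$ as a finite sum of terms $2f_n\cos[\beta(N_b-2n)x]$ (plus the self-paired $f_{N_b/2}$ for even $N_b$). This is essentially the computation the paper uses to prove the \emph{Lemma}, pushed one step further so that the Proposition drops out directly. Your route is more elementary and arguably tighter: the ratio argument in the paper only yields $\big(Z^*(\beta,\mathrm{i}x)\big)^2=\big(Z(\beta,\mathrm{i}x)\big)^2$, i.e.\ $Z^*=\pm Z$ pointwise, and an extra continuity/normalization remark is needed to fix the sign, whereas your explicit real-cosine representation makes reality and evenness manifest with no sign ambiguity. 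What the paper's approach buys in exchange is that it is phrased entirely in terms of the Lee-Yang zeros $\mathrm{e}^{\mathrm{i}\phi_n}$, which transfers verbatim to the analogous partition function $A(\mathrm{i}x)$ used later (a ratio of factorized products); your argument also covers that case, but only after noting that the denominator $Z(\beta,0)$ is a real constant. The one implicit ingredient you should state is that the coefficients $f_n$ are real (they are zero-field partition functions restricted to a fixed magnetization sector, hence real and positive), since the reality of the cosine sum rests on it.
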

\begin{proof}
From the Lemma, we know that
\begin{align}
\prod_{n=1}^{N_b}{\left(a+b\text{e}^{\text{i}\phi_n}\right)}=\prod_{n=1}^{N_b}{\left(a+b\text{e}^{-\text{i}\phi_n}\right)},\label{iphi_n}
\end{align}
where $a$ and $b$ are arbitrary complex numbers.
We calculate the ratio of them
\begin{align}
\frac{Z^*\left(\beta,\text{i}x\right)}{Z\left(\beta,\text{i}x\right)}&=\text{e}^{-2\text{i}\beta N_bx}\prod_{n=1}^{N_b}{\frac{\text{e}^{2\text{i}\beta x}-\text{e}^{-\text{i}\phi_n}}{\text{e}^{-2\text{i}\beta x}-\text{e}^{\text{i}\phi_n}}}\nonumber\\
&=\prod_{n=1}^{N_b}{\frac{1-\text{e}^{-2\text{i}\beta x-\text{i}\phi_n}}{\text{e}^{-2\text{i}\beta x}-\text{e}^{\text{i}\phi_n}}}\nonumber\\
&=\prod_{n=1}^{N_b}{\left(-\text{e}^{-\text{i}\phi_n}\right)}.
\end{align}
From Eq.~\eqref{iphi_n}, we have
\begin{align}
\frac{Z\left(\beta,\text{i}x\right)}{Z^*\left(\beta,\text{i}x\right)}&=\prod_{n=1}^{N_b}{\frac{1-\text{e}^{2\text{i}\beta x+\text{i}\phi_n}}{\text{e}^{2\text{i}\beta x}-\text{e}^{-\text{i}\phi_n}}}\nonumber\\
&=\prod_{n=1}^{N_b}{\left(-\text{e}^{\text{i}\phi_n}\right)}\nonumber\\
&=\prod_{n=1}^{N_b}{\left(-\text{e}^{-\text{i}\phi_n}\right)},
\end{align}
where $a=0$ and $b=-1$.
Therefore, we find that $\frac{Z^*\left(\beta,\text{i}x\right)}{Z\left(\beta,\text{i}x\right)}=\frac{Z\left(\beta,\text{i}x\right)}{Z^*\left(\beta,\text{i}x\right)}$, which means $Z^*\left(\beta,\text{i}x\right)=Z\left(\beta,\text{i}x\right)$.
Employing the same skill, we can also see that $\frac{Z\left(\beta,-\text{i}x\right)}{Z\left(\beta,\text{i}x\right)}=\frac{Z\left(\beta,\text{i}x\right)}{Z\left(\beta,-\text{i}x\right)}$, and the partition function $Z\left(\beta,\text{i}x\right)$ is a real and even function, where $x$ is real.
\end{proof}
\end{appendix}


%


\end{document}